\renewcommand{\P}[1]{\mathsf{P}_{#1}}
\title{\LARGE \bf
Zero-sum turn games using $Q$-learning:\\ finite computation with security guarantees}
 \author{Sean Anderson, Chris Darken, and Jo\~ao P. Hespanha
 	\thanks{© 2025 IEEE. Personal use of this material is permitted. 
 		Permission from IEEE must be obtained for all other uses.}
            \thanks{This material is based upon work supported by the National Science Foundation Graduate Research Fellowship under Grant No. 2139319 and by the U.S. Office of Naval Research MURI grant No. N00014-23-1-2708 and No. N00014-24-1-2492. Any opinion, findings, and conclusions or recommendations expressed in this material are those of the authors(s) and do not necessarily reflect the views of the National Science Foundation or U.S. Office of Naval Research.}
                       \thanks{S. Anderson and J. P. Hespanha are with the Electrical and Computer Engineering Department, University of California, Santa Barbara, CA 93106 USA (e-mail: seananderson@ucsb.edu, hespanha@ece.ucsb.edu). C. Darken is with the Computer Science Department, Naval Postgraduate School, Monterey, CA 93943 USA (email: cjdarken@nps.edu).}}
\begin{document}
%       \title{Control-Oriented Identification via Stochastic Optimization}
%       \author{Sean Anderson, Jo\~ao P. Hespanha, \IEEEmembership{Fellow, IEEE}
%               \thanks{Submitted for review on March 8, 2024. This material is based upon work supported by the National Science Foundation Graduate Research Fellowship under Grant No. 2139319 and by the U.S. Office of Naval Research MURI grant No. N00014-23-1-2708. Any opinion, findings, and conclusions or recommendations expressed in this material are those of the authors(s) and do not necessarily reflect the views of the National Science Foundation or U.S. Office of Naval Research.}
%               \thanks{S. Anderson and J. P. Hespanha are with the Electrical and Computer Engineering Department, University of California, Santa Barbara, CA 93106 USA (e-mail: seananderson@ucsb.edu, hespanha@ece.ucsb.edu).}}

\maketitle
\thispagestyle{empty}

\begin{abstract}
  This paper addresses zero-sum ``turn'' games, in which only one
  player can make decisions at each state.  We show that pure
  saddle-point state-feedback policies for turn games can be
  constructed from dynamic programming fixed-point equations for a
  single value function or $Q$-function. These fixed-points can be
  constructed using a suitable form of $Q$-learning. For discounted
  costs, convergence of this form of $Q$-learning can be established
  using classical techniques. For undiscounted costs, we provide a
  convergence result that applies to finite-time deterministic games,
  which we use to illustrate our results. For complex games, the
  $Q$-learning iteration must be terminated before exploring the
  full-state, which can lead to policies that cannot guarantee the
  security levels implied by the final $Q$-function. To mitigate this,
  we propose an ``opponent-informed'' exploration policy for selecting
  the $Q$-learning samples. This form of exploration can guarantee
  that the final $Q$-function provides security levels that hold, at
  least, against a given set of policies. A numerical demonstration for a multi-agent
  game, Atlatl, indicates the effectiveness of these methods.
\end{abstract}

\section{Introduction} \label{sec:introduction}

Learning in game settings has been successfully applied in highly
complex and large-scale games such as Atari in 2013 via deep
$Q$-learning networks~\cite{mnih_playing_2013} or Go with AlphaGo
exceeding human capabilities in 2016~\cite{silver_mastering_2016}. We
focus this paper on $Q$-learning, a method originally developed in
\cite{watkins_learning_1989} that extends the dynamic programming
principle to model-free reinforcement learning by eliminating the need
for a known transition model.

This paper is focused on zero-sum games where players play in
``turns,'' meaning that, at each state of the game, only one player
can make decisions, while the other player waits for a turn. Many
board-games (like chess, Connect Four, or Tic-Tac-Toe) are ``turn''
games with alternating moves, but we consider here a general form of
turn games where players do not necessarily alternate, and the same
player can make several ``moves'' in a row, before the other player
gets a turn. In fact, it may happen that one player can only make
decisions at the start of the game, and rapidly lose the ability to do
so. The motivation for this setup are games of strategy, like Atlatl
\cite{cjdarken_public_2025,rood2022scaling}, Mancala, and Dots and Boxes. Further
motivations include pursuit-evasion problems where the evader loses
the ability to act while cornered or in network security where an
attacker may control certain states (e.g., when host is vulnerable)
and a defender can respond at select times (e.g., publish
patches). Although such games can be modeled, in principle, as
alternating play by chaining multiple actions into an ``extended''
single action, this approach leads to an exponential growth of the
action space. By treating them directly as turn-based, we avoid this
combinatorial blow-up, which reduces the computational complexity and
memory requirements for $Q$-learning. As discussed in
Section~\ref{sec:prob_form}, we model zero-sum turn games as controlled Markov chains.

The solution concept we pursue are saddle-point state-feedback
policies, for which none of the players regret the choice of a
policy in view of the choice of the other player. Such policies come
with ``security values,'' in the sense of guaranteed levels of reward
for each player~\cite{Hespanha17}. As opposed to general zero-sum
Markov games (see, e.g., \cite{FilarVrieze97}), turn games with finite
state and action spaces have pure saddle-point policies under very
mild assumptions, avoiding the need to consider mixed or behavioral
policies. In fact, we show in Section~\ref{sec:DP} that pure
saddle-point state-feedback policies can be constructed from a dynamic
programming fixed-point equation that resembles that associated with
single-agent optimizations, but with an additional term that captures the
zero-sum turn nature of the game.

The computational method proposed to find saddle-point policies is
directly inspired by single-agent $Q$-learning, adapted to the
turn-games setting (see Section~\ref{sec:q-learning}). This enables
the use of classical techniques to prove convergence of the
$Q$-learning iteration. We show this to be the case in deterministic
games with a finite time-horizon.

We are especially interested in games with very large state spaces,
for which the $Q$-learning iteration must be terminated at some finite
iteration $K$, for which only a (potentially very small) fraction of
the state-space has been explored. We show through formal arguments
(and later through simulations) how this can lead to policies that
are not security policies and, in fact, can lead to outcomes much
worse that the value of the $Q$ function at iteration $K$ would
indicate. Towards mitigating this problem, we show in
Section~\ref{sec:fragility} that it is possible to ``protect'' the
$Q$-learning policies against a known set of policies for the opponent
by appropriate exploration of the state-space during the $Q$-learning
iterations. This result is used in Section~\ref{sec:exploration} to
construct an ``opponent-informed'' exploration policy for selecting
the $Q$-learning samples. This form of exploration can guarantee that
the final $Q$-function provides security levels that hold, at least,
against a given set of policies.

The results and algorithms proposed are illustrated in
Section~\ref{sec:numerical-results} in the sophisticated Atlatl
strategy game~\cite{cjdarken_public_2025,rood2022scaling}, which allows us to scale the
results to state-spaces for which a full exploration is
computationally prohibitive with modest computation.

\subsection{Related work}

\textit{Single-player} $Q$-learning was originally studied in the
single player context~\cite{watkins_learning_1989}. Convergence proofs
were formally presented in~\cite{watkins_technical_1992} for the
discounted and undiscounted cases in stochastic settings. The
convergence analysis was extended in
\cite{tsitsiklis1994asynchronous} based on stochastic
approximation arguments and was the first to include contraction
arguments for $Q$-learning.

Early work on \textit{Markov games} showed the existence and
characterization of equilibria in matrix games as well as the
convergence of value and policy
iteration~\cite{shapley_stochastic_1953}. We refer the reader to the
monograph~\cite{FilarVrieze97} for a systematic treatment of zero-sum
Markov games. The authors in~\cite{littman_generalized_1996} showed
convergence in general Markov games under contraction
assumptions. Turn-based games are less common and have appeared in
\cite{sidford_solving_2019, jia_feature-based_2019} with discounted
$Q$-learning and in~\cite{shah_reinforcement_2020} with a discounted
Explore-Improve-Supervise scheme.

The authors in~\cite{brafman2002r} propose R-max, a model-based
reinforcement learning algorithm for stochastic games, including the
zero-sum case, and characterize its \emph{finite sample performance}
to achieve a policy that yields an expected reward within a tolerance
of the associated value function. For discounted alternating play
zero-sum stochastic games, \cite{sidford_solving_2019} provides a
near-optimal time and sample complexity result. Additional work, based
on embedding the state-transition function in a feature space, allows
for near-optimal strategies with finite sample complexity when using
$Q$-learning~\cite{jia_feature-based_2019}.

\emph{Robust and adversarial reinforcement learning} share
connections with zero-sum games, particularly using simultaneous play. We
omit a detailed comparison, but work on robust Markov decision processes
can be found in~\cite{nilim_robust_2005,iyengar_robust_2005} and
appears in earlier work on $Q$-learning
\cite{littman_generalized_1996}.

The idea of using \emph{externally provided (sets of) policies for training purposes} has appeared previously in areas such as Policy Space Response Oracles (PSRO)~\cite{lanctot2017unified} and fictitious self-play~\cite{vinyals2019grandmaster}, differing from our work in that these sets are typically constructed iteratively.

\section{Zero-sum turn games}   \label{sec:prob_form}

In \emph{turn games}, only one player can make a decision at each
state. For 2-player turn games, this means that the state-space
$\scr{S}$ can be partitioned as the union of two disjoint sets
$\scr{S}_1\cap\scr{S}_2=\emptyset$ with the understanding that, when the state
$s_t$ at time $t\ge 1$ belongs to $\scr{S}_1$, the following action
$a_t\in\scr{A}$ is selected by player $\P1$, otherwise
$s_t\in\scr{S}_2$ and the action is selected by player
$\P2$. Single-player optimizations correspond to the special case
$\scr{S}_2=\emptyset$, which means that only $\P1$ makes decisions.

We consider here \emph{zero-sum turn games} with symmetric rewards for
the two players and denote by $r_{t+1}\in\scr{R}\subset\R$, $t\ge 1$ the
immediate reward collected by the player that selected the action
$a_t$ at time $t$. The total reward collected by player $\P{i}$,
$i\in\{1,2\}$ for the initial state $s_1\in\scr{S}$ is then given by
\begin{align}\label{eq:reward-i}
  J_i(s_1) \eqdef \sum_{t=1}^\infty \E [r_{t+1} \sgn_i(s_t)]
\end{align}
where $\sgn_i(s_t)=1$ if $s_t \in \scr{S}_i$ and $\sgn_i(s_t)=-1$
otherwise.
% \begin{align*}
%   \sgn_i(s_t) \eqdef \begin{cases}
%     +1 & s_t \in \scr{S}_i, \\
%     -1 & s_t \notin \scr{S}_i.
%   \end{cases}
% \end{align*}
The sets $\scr{S},\scr{A},\scr{R}$ are assumed finite and the state
$s_t$ is a \textit{stationary controlled Markov chain} in the sense
that
\begin{multline}
  P(s_{t+1} = s', r_{t+1} = r \mid s_t = s, a_t =a, \cal{F}_t) \\
  = p(s', r \mid s, a), \quad
  \forall t \geq 1,\; s, s' \in \scr{S},\; a \in \scr{A},\; r \in \scr{R}
\end{multline}
for a transition/reward probability function
$p: \scr{S} \times \scr{R} \times \scr{S} \times \scr{A} \rightarrow [0, 1]$ that satisfies
\begin{align}
  \sum_{s' \in \scr{S}, r \in \scr{R}} p(s', r \mid s, a) = 1, \quad s \in \scr{S},\; a \in \scr{A},
\end{align}
and $\cal{F} \eqdef \{\cal{F}_t : t \geq 1\}$ denotes the filtration
generated by $\{(s_t, a_t, r_t): t \geq 1\}$.
We say that a game is \emph{deterministic} if $p(\cdot,\cdot)$ only takes
values in the set $\{0,1\}$ and
% \begin{align*}
%   p(s',r|s,a)\in\{0,1\}, \qquad \forall a',s\in\scr{S}, r\in\scr{R},a\in\scr{A};
% \end{align*}
% which essentially means that each state/action pair $(s_t,a_t)$ is
% always followed by the same next state/reward pair
% $(s_{t+1},r_{t+1})$.
that the game \emph{terminates in finite time} if there exists a
finite time $T\ge 1$ such that $r_t=0$, $\forall t\ge T$ with probability one,
regardless of the actions $a_t\in\scr{A}$ selected.
% \begin{multline*}
%     \Prob(r_t =0, \forall t>T\;|\; s_1,a_1,\dots,a_T)=1,
%     \\
%     \forall t>T, \; s_1\in\scr{S}, \; a_1,\dots,a_T\in\scr{A};
% \end{multline*}
% which means that no reward can be collected after some finite time
% $T$, regardless of the actions selected.

%Notably, when $p(s',r,s,a) = 0$ for every
%$a \in \scr{A}, r \in \scr{R}$ and $s',s$ in the same set
%$\scr{S}_1,\scr{S}_2$, we have an ``alternate turns" game.
%       \sean{Potentially include other 3 consequences of this space-permitting:
%\begin{enumerate}
%       \item The ``sign terms'' $\sgn_i(s_t)$ in the rewards have the
%       property that $\sgn_1(s)=-\sgn_2(s)$ and therefore are positive
%       reward collected by one player corresponds to a negative reward
%       collected by the other player. This thus corresponds to a zero-sum
%       game with $J_1=-J_2$.
%       \item When $\scr{S}=\scr{S}_i$, we have a single-player maximization
%       for player $\P{i}$ and the other player never selects any action.
%       \item Whenever $p(s' ,r,s,a)=0$, for every $a\in\scr{A},r\in\scr{R}$ and
%       $s',s$ in different sets $\scr{S}_1$ or $\scr{S}_2$, we essentially
%       have two independent single-player maximizations.
%\end{enumerate}
%}
%
\subsection{Value function for turn games}

A policy for a player $\P{i}$, $i \in \{1, 2\}$ is a map
$\pi_i: \scr{S}_i \rightarrow \scr{A}$ that selects the action
$a_t= \pi_i(s_t)$ when the state $s_t$ is in $\scr{S}_i$. For a pair of
policies $(\pi_1, \pi_2)$ for players $\P{1},\P{2}$, respectively, we
define the \emph{policies' value function} as
\begin{align}\label{eq:reward-sum}
  V_{\pi_1,\pi_2}(s)
  % &\eqdef \begin{cases}
  %   \sum_{t=\tau}^\infty \E_{\pi_1, \pi_2} [r_{t+1}\sgn_1(s_t) \mid s_\tau = s]& s \in \scr{S}_1 \\
  %   \sum_{t=\tau}^\infty \E_{\pi_1, \pi_2} [r_{t+1}\sgn_2(s_t) \mid s_\tau = s]& s \in \scr{S}_2
  % \end{cases}\notag\\
  &\eqdef \sgn_1(s) \sum_{t=\tau}^\infty \E_{\pi_1, \pi_2} [r_{t+1}\sgn_1(s_t) \mid s_\tau = s] \notag\\
  &= \sgn_2(s) \sum_{t=\tau}^\infty \E_{\pi_1, \pi_2} [r_{t+1}\sgn_2(s_t) \mid s_\tau = s],
\end{align}
where the subscript in the expected values highlights that the actions
are determined by the given policies. The second equality is a
consequence of the fact that $\sgn_1(s) =- \sgn_2(s)$,
$\forall s\in\scr{S}$ and the time $\tau\ge1$ from which the summation is started
does not affect its value due to the stationarity of the Markov chain.
The reward~\eqref{eq:reward-i} collected by player $\P{i}$ can be
obtained from the value function using
\begin{align}\label{eq:J-V-i}
  J_i(s_1)=\sgn_i(s_1)V_{\pi_1,\pi_2}(s_1), \quad\forall i\in\{1,2\}.
\end{align}

%       \begin{assumption} [Proper policies]\label{asm:proper-policies}
%       Assume there exists a state $\tilde{s}$ that is absorbing such that $p(\tilde{s},r \mid \tilde{s}, a) = 1, \forall a \in \scr{A}, ~p(s, r, \mid \tilde{s}, a) = 0 ~\forall a \in \scr{A}, ~s \in \scr{S}\backslash\{\tilde{s}\}$ with zero reward. Furthermore, assume the policies $(\pi_1,\pi_2)$ are proper stationary policies in the sense that the probability of reaching $\tilde{s}$ goes to 1 as time goes to infinity.\sean{This is for the purposes of $Q$-learning convergence without a discount factor. This can either be there exists a proper stationary policy or all the policies are proper stationary policies, which is weaker but I think we said this is the case for the games we consider.}
%   \end{assumption}

\subsection{Saddle-points and security policies}\label{sec:equilibria}
A pair of policies $(\pi_1^*, \pi_2^*)$, for players $\P{1}, \P{2}$
respectively, is a \emph{(pure) feedback saddle-point} if for every
other policies $(\pi_1,\pi_2)$ for players $\P{1}, \P{2}$, respectively,
we have that
\begin{subequations}\label{eq:saddle-point}
  \begin{align}
    \label{eq:saddle-point-1}
    J_1^*(s)&\eqdef\sgn_1(s)V_{\pi_1^*, \pi_2^*}(s) \geq \sgn_1(s) V_{\pi_1, \pi_2^*}(s), \\
    \label{eq:saddle-point-2}
    J_2^*(s)&\eqdef\sgn_2(s)V_{\pi_1^*, \pi_2^*}(s) \geq \sgn_2(s) V_{\pi_1^*, \pi_2}(s),
  \end{align}
\end{subequations}
$\forall s\in\scr{S}$, and $J_1^*(s_1)=-J_2^*(s_1)$ is called the \emph{value
  of the game}. In view of~\eqref{eq:J-V-i}, \eqref{eq:saddle-point-1}
expresses no regret in the sense that $\P{1}$ does not regret its
choice of $\pi_1^*$ (over any other policy $\pi_1$) against $\pi_2^*$ and,
similarly, \eqref{eq:saddle-point-2} expresses no regret for $\P{2}$.
Saddle-point policies for zero-sum games are known to also be
\emph{security policies with values $J_1^*(s_1)$ and $J_2^*(s_1)$ for
  players $\P{1}$ and $\P{2}$, respectively}; in the sense
that
\begin{subequations}\label{eq:security-policy}
  \begin{align}
    J_1^*(s_1)
    % \eqdef \sgn_1(s_1)V_{\pi_1^*,\pi_2^*}(s_1) \\
    &=\max_{\pi_1} \min_{\pi_2} \sgn_1(s_1)V_{\pi_1,\pi_2}(s_1)\notag\\
    &=\min_{\pi_2} \sgn_1(s_1)V_{\pi_1^*,\pi_2}(s_1)\\
    J_2^*(s_1)
    % \eqdef \sgn_2(s_1)V_{\pi_1^*,\pi_2^*}(s_1) \\
    &=\max_{\pi_2} \min_{\pi_1} \sgn_2(s_1)V_{\pi_1,\pi_2}(s_1)\notag\\
    &=\min_{\pi_1} \sgn_2(s_1)V_{\pi_1,\pi_2^*}(s_1),
  \end{align}
\end{subequations}
which means that, by using the policy $\pi_i^*$, the player $\P{i}$ can
expect a reward at least as large as $J_i^*(s_1)$, \emph{no matter
  what policy the other player uses}~\cite{Hespanha17}.

\section{Dynamic Programming}\label{sec:DP}

The value function~\eqref{eq:reward-sum} satisfies a fixed-point
equation that resembles that of single-player games, but adapted to
the setting of zero-sum turn games. We present below a sufficient
fixed-point condition for a function $V$ to
satisfy~\eqref{eq:reward-sum}. This result requires the following
definition: we say that a function $V: \scr{S} \rightarrow \mathbb{R}$ is
\emph{absolutely summable} if for every pair of policies
$(\pi_1, \pi_2)$ for players $\P{1},\P{2}$, respectively, the series
\begin{align}
  \sum_{t=\tau}^\infty \E_{\pi_1,\pi_2} [V(s_t) \mid s_\tau = s],
\end{align}
is absolutely convergent $\forall s \in \scr{S}, \tau \geq 1$. For games that
terminate in finite time $T$, any function
$V: \scr{S} \rightarrow \mathbb{R}$ for which $V(s_t)=0$,
$\forall t\ge T$ with probability one is absolutely summable: the series
degenerates into a finite summation.

\begin{lemma}[Saddle-point sufficient condition]
  \label{lemma:dp-saddle}
  Suppose there exists an absolutely summable function
  $V: \scr{S} \rightarrow \mathbb{R}$ for which
  \begin{multline}\label{eq:bellman-saddle-value}
    V(s) = \max_{a \in \scr{A}} \E \big[r_{t+1} + \sgn_1(s_t) \sgn_1(s_{t+1})V(s_{t+1}) \\
    \mid s_t = s, a_t = a\big], \quad \forall s \in \scr{S}, t \geq 1,
  \end{multline}
  then for any pair of policies $(\pi_1^*, \pi_2^*)$ for which
  \begin{multline}
    \label{eq:bellman-policy}
    \pi_i^*(s)
    \in \argmax_{a \in \scr{A}} \E \big[r_{t+1}+ \sgn_1(s_t)\sgn_1(s_{t+1}) V(s_{t+1}) \\
    \mid s_t = s, a_t = a\big], \quad \forall s \in \scr{S}_i, i \in \{1, 2\}
  \end{multline}
  form a feedback saddle-point and are also security policies, with
  value $J_1^*(s_1)=\sgn_1(s_1)V(s_1)=-J_2^*(s_1)$. \frqed
\end{lemma}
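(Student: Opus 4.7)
The plan is to recast the Bellman equation~\eqref{eq:bellman-saddle-value} into a ``$\P{1}$-signed'' form, apply a one-step inequality under arbitrary policies, telescope, and pass to the limit via the absolute-summability hypothesis.

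\emph{Step 1 (reformulation).} I would multiply~\eqref{eq:bellman-saddle-value} through by $\sgn_1(s_t)$ and push the sign inside the outer optimization, using $\sgn_1(s_t)^2=1$. Because $\sgn_1=+1$ on $\scr{S}_1$ and $-1$ on $\scr{S}_2$, this converts the outer $\max$ into a $\max$ on $\scr{S}_1$ and a $\min$ on $\scr{S}_2$. Writing $W(s):=\sgn_1(s)V(s)$, the Bellman equation becomes
\begin{align*}
  W(s) &= \max_{a} \E\big[r_{t+1}\sgn_1(s_t)+W(s_{t+1})\mid s_t{=}s,a_t{=}a\big],\; s\in\scr{S}_1,\\
  W(s) &= \min_{a} \E\big[r_{t+1}\sgn_1(s_t)+W(s_{t+1})\mid s_t{=}s,a_t{=}a\big],\; s\in\scr{S}_2,
\end{align*}
with $\pi_1^*$ attaining the $\max$ and $\pi_2^*$ the $\min$. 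This is the standard zero-sum Bellman form, with $W$ playing the role of $\P{1}$'s value.

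\emph{Step 2 (saddle inequalities).} For~\eqref{eq:saddle-point-1}, fix $\pi_2=\pi_2^*$ and let $\pi_1$ be arbitrary. At $s_t\in\scr{S}_1$, sub-optimality of $\pi_1(s_t)$ yields $W(s_t)\geq \E[r_{t+1}\sgn_1(s_t)+W(s_{t+1})\mid s_t,a_t=\pi_1(s_t)]$, while at $s_t\in\scr{S}_2$ the choice $a_t=\pi_2^*(s_t)$ attains the $\min$, so the same inequality holds with equality. Taking conditional expectation under $(\pi_1,\pi_2^*)$ and summing for $t=\tau,\ldots,T-1$ telescopes the $W$-terms:
\begin{align*}
  W(s)\geq \sum_{t=\tau}^{T-1}\E_{\pi_1,\pi_2^*}[r_{t+1}\sgn_1(s_t)\mid s_\tau{=}s]+\E_{\pi_1,\pi_2^*}[W(s_T)\mid s_\tau{=}s].
\end{align*}
Letting $T\to\infty$, the absolute-summability hypothesis forces the boundary term to vanish (trivially in the finite-horizon case, and by the $n$th-term test otherwise), giving $\sgn_1(s)V(s)\geq \sgn_1(s)V_{\pi_1,\pi_2^*}(s)$, which is~\eqref{eq:saddle-point-1}. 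The symmetric argument with $\pi_1^*$ fixed and $\pi_2$ arbitrary reverses the step-wise inequality (equality now on $\scr{S}_1$, $\leq$ on $\scr{S}_2$) to give $W(s)\leq \sgn_1(s)V_{\pi_1^*,\pi_2}(s)$, which after multiplication by $-1$ and use of $\sgn_2=-\sgn_1$ is~\eqref{eq:saddle-point-2}.

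\emph{Step 3 (value and security).} Substituting $\pi_1=\pi_1^*$ (resp.\ $\pi_2=\pi_2^*$) in either argument collapses every step-wise inequality to equality, whence $V=V_{\pi_1^*,\pi_2^*}$ and $J_1^*(s_1)=\sgn_1(s_1)V(s_1)=-J_2^*(s_1)$. The security identities~\eqref{eq:security-policy} then follow from the standard max-min $\leq$ saddle value $\leq$ min-max sandwich, both bounds being attained by $(\pi_1^*,\pi_2^*)$ via~\eqref{eq:saddle-point-1}--\eqref{eq:saddle-point-2}. The main obstacle I anticipate is the sign bookkeeping in Step 1 and carefully justifying the vanishing of $\E_{\pi_1,\pi_2^*}[W(s_T)\mid s_\tau=s]$ from the paper's absolute-summability definition.
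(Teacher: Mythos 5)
Your proposal is correct and follows essentially the same route as the paper: one-step optimality/suboptimality inequalities at the equilibrium actions, followed by a signed telescoping sum whose boundary term vanishes by absolute summability. The only differences are presentational --- the paper delegates the telescoping step to Proposition~\ref{prop:dp-ineq} and keeps the factor $\sgn_1(s)$ attached to $V$ rather than introducing $W=\sgn_1 V$ to expose the max/min structure explicitly.
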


We can recognize in~\eqref{eq:bellman-saddle-value} a modified version
of the usual dynamic programming equation for single-agent
optimization, with a new term $\sgn_1(s_t) \sgn_1(s_{t+1})$ that
captures the turn nature of the game.

The following proposition is needed to prove
Lemma~\ref{lemma:dp-saddle} and establishes a few relationships
between fixed-point equalities and inequalities and the value function
in~\eqref{eq:reward-sum}.
\begin{proposition}
  \label{prop:dp-ineq}
  Suppose there exists an absolutely summable function
  $V: \scr{S} \rightarrow \mathbb{R}$ for which
  \begin{multline}\label{eq:DP-policy-sufficiency}
      V(s) = \E_{\pi_1, \pi_2}[r_{t+1} + \sgn_1(s_t) \sgn_1(s_{t+1}) V(s_{t+1})\\
      \mid s_t = s], \quad \forall s \in \scr{S},\; t \geq 1,
  \end{multline}
  then $V(s)$ must be equal to the value function
  in~\eqref{eq:reward-sum}.
  % \begin{multline} \label{eq:v-eq-bellman}
  %     V(s) = V_{\pi_1, \pi_2}(s) \eqdef \sgn_1(s)\sum_{t = \tau}^{\infty} \E_{\pi_1, \pi_2}[r_{t+1}\sgn_1(s_t) \\
  %     \mid s_\tau=s], \quad \forall s \in \scr{S},\l \tau \geq 1.
  % \end{multline}
  Instead, if
  \begin{align}\label{eq:DP-policy-inequality}
    \sgn_1(s)V(s)
    &\le\E_{\pi_1,\pi_2}\big[ \sgn_1(s_t)r_{t+1}
    +\sgn_1(s_{t+1})V(s_{t+1})
    \notag\\&\qquad
    \;\big|\;s_t=s\big],
    \quad \forall s\in\scr{S},\;t\ge 1.
  \end{align}
  then
  \begin{align}
    \label{eq:V-2-inequality}
    \sgn_1(s)V(s)
    &\le \sgn_1(s)V_{\pi_1,\pi_2}(s)
    % \eqdef\sum_{t=\tau}^\infty \E_{\pi_1,\pi_2}\big[r_{t+1}\sgn_1(s_t)
    % \notag\\&\qquad
    % \;|\;s_\tau=s \big],
    \quad \forall s\in\scr{S},\; \tau\ge 1
  \end{align}
  and, if instead,
  \begin{align}\label{eq:DP-policy-inequality-2}
    \sgn_1(s)V(s)
    &\ge\E_{\pi_1,\pi_2}\big[ \sgn_1(s_t)r_{t+1}
    +\sgn_1(s_{t+1})V(s_{t+1})
    \notag\\&\qquad
    \;\big|\;s_t=s\big],
    \quad \forall s\in\scr{S},\;t\ge 1.
  \end{align}
  then
  \begin{align}
    \label{eq:V-2-inequality-2}
    \sgn_1(s)V(s)
    &\ge\sgn_1(s)V_{\pi_1,\pi_2}(s)
    % \eqdef \sum_{t=\tau}^\infty \E_{\pi_1,\pi_2}\big[r_{t+1}\sgn_1(s_t)
    % \notag\\&\qquad
    % \;|\;s_\tau=s \big],
    \quad \forall s\in\scr{S},\; \tau\ge 1.
  \end{align}
\end{proposition}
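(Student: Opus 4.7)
The plan is to reduce all three statements to a single linear recursion in the auxiliary function $W(s) \eqdef \sgn_1(s) V(s)$, unroll that recursion $N$ steps by the tower property of conditional expectation, and pass to the limit using absolute summability. The reason to work with $W$ is that on the event $\{s_t = s\}$ we have $\sgn_1(s_t) = \sgn_1(s)$ and $\sgn_1(s)^2 = 1$, so multiplying \eqref{eq:DP-policy-sufficiency} by $\sgn_1(s)$ rewrites it as
\[
W(s) = \E_{\pi_1,\pi_2}\bigl[\sgn_1(s_t)\,r_{t+1} + W(s_{t+1}) \mid s_t = s\bigr],
\]
and the two inequality hypotheses \eqref{eq:DP-policy-inequality} and \eqref{eq:DP-policy-inequality-2} are already in this normalized form with $=$ replaced by $\le$ and $\ge$, respectively. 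This eliminates the sign factor $\sgn_1(s_t)\sgn_1(s_{t+1})$ that makes the raw equation awkward.

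Next I would iterate. Applying the relation at time $\tau$ and using the tower property with $\cal{F}_{\tau+1}$ to substitute the relation at time $\tau+1$ into the residual term $\E[W(s_{\tau+1}) \mid s_\tau = s]$, and continuing for $N$ steps, one obtains
\[
W(s) \mathrel{R} \sum_{t=\tau}^{\tau+N-1} \E\bigl[\sgn_1(s_t)\,r_{t+1} \mid s_\tau = s\bigr] + \E\bigl[W(s_{\tau+N}) \mid s_\tau = s\bigr],
\]
where $\mathrel{R}$ is $=$, $\le$, or $\ge$ depending on which of the three hypotheses one starts from; the direction of the inequality is preserved at each step because conditional expectation is linear and monotone. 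Then, letting $N \to \infty$, absolute summability of $V$ (and hence of $W$, since $|W|=|V|$ on the finite state space) forces the remainder $\E[W(s_{\tau+N}) \mid s_\tau = s] \to 0$, while the sum converges to $\sum_{t=\tau}^{\infty} \E[\sgn_1(s_t)\,r_{t+1} \mid s_\tau = s]$, which by the definition \eqref{eq:reward-sum} is exactly $\sgn_1(s)\,V_{\pi_1,\pi_2}(s)$. In the equality case this gives $W(s) = \sgn_1(s)\,V_{\pi_1,\pi_2}(s)$, hence $V = V_{\pi_1,\pi_2}$; in the two inequality cases it gives precisely \eqref{eq:V-2-inequality} and \eqref{eq:V-2-inequality-2}.

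The main obstacle is the sign bookkeeping together with the tail-vanishing step. The change of variable to $W$ handles the former cleanly, but one still has to justify $\E[W(s_{\tau+N}) \mid s_\tau = s] \to 0$ from a hypothesis phrased on $V$ rather than $W$. Over the finite state, action, and reward spaces in the model this reduces to the elementary bound $\lvert\E[W(s_t) \mid s_\tau = s]\rvert \le \E[\,|V(s_t)|\mid s_\tau = s]$, whose summability follows from the absolute summability hypothesis (and is automatic when the game terminates in finite time, since the summands are eventually zero). Beyond this, every step is routine linearity/monotonicity of conditional expectation, so once the normalization and the tail argument are in place the three claims follow in parallel.
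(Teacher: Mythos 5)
Your proposal is correct and follows essentially the same route as the paper's proof: multiplying by $\sgn_1(s)$ to absorb the sign product into $W(s)=\sgn_1(s)V(s)$, telescoping/unrolling via the tower property, and using absolute summability to make the tail term vanish, with the inequality cases handled in parallel by monotonicity of conditional expectation. The only cosmetic difference is that the paper sums the one-step telescoping identities directly over $t\ge\tau$ rather than unrolling $N$ steps and passing to the limit, which amounts to the same argument.
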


\begin{proof-proposition}{\ref{prop:dp-ineq}}
  Multiplying both sides of~\eqref{eq:DP-policy-sufficiency} by
  $\sgn_1(s)$, we conclude that
  \begin{multline}\label{eq:sgn-equal}
    \sgn_1(s) V(s) - \E[\sgn_1(s_{t+1})V(s_{t+1}) \mid s_t=s] \\
    = \E[r_{t+1}\sgn_1(s_t) \mid s_t=s],
  \end{multline}
  $\forall s \in \scr{S}$. Using this equality with $s=s_t$ and taking conditional
  expectations given $s_\tau=\bar{s} \in \scr{S}$, for some
  $\tau \leq t$, we conclude that
  \begin{multline*}
    \E [ \sgn_1(s_t)V(s_t) - \sgn_1(s_{t+1})V(s_{t+1}) \mid s_\tau=\bar{s}] \\
    = \E[r_{t+1}\sgn_1(s_t) \mid s_\tau=\bar{s}].
  \end{multline*}
  Adding both sides of this equality from $t=\tau$ to $t \to \infty$ and using
  absolute convergence of the series on the right-hand side, we
  conclude that
  \begin{align}\label{eq:DP-policy-sufficiency-mult-sum}
    %\E[\sgn_1(s_\tau) V(s_\tau) \mid s_\tau = \bar{s} ] =
    \sgn_1(\bar{s}) V(\bar{s})
    =\sum_{t = \tau}^{\infty} \E [ r_{t+1} \sgn_1(s_t) \mid s_\tau=\bar{s}],
  \end{align}
  from which~\eqref{eq:reward-sum} follows by multiplying both sides
  of the equality above by $\sgn_1(\bar{s})$.
  
  If instead of an equality in~\eqref{eq:DP-policy-sufficiency}, we
  have an inequality, we start the reasoning above with an inequality
  instead of~\eqref{eq:sgn-equal}, which eventually leads to a similar
  inequality, instead of the equality
  in~\eqref{eq:DP-policy-sufficiency-mult-sum}, from which we
  obtain~\eqref{eq:V-2-inequality}
  or~\eqref{eq:V-2-inequality-2}.\frQED
\end{proof-proposition}

\begin{proof-lemma}{\ref{lemma:dp-saddle}}
  When $s \in \scr{S}_1$, the policy $\pi_1^*(s)$ reaches the maximum
  in~\eqref{eq:bellman-saddle-value}, but an arbitrary policy
  $\pi_1(s)$ may not and therefore
  \begin{align}
    \label{eq:subopt-pi}
    V(s)
    &= \E [r_{t+1} + \sgn_1(s_t) \sgn_1(s_{t+1}) V(s_{t+1})
    \notag\\&\qquad
    \mid s_t = s, a_t = \pi_1^*(s)]  \notag\\
    &= \E_{\pi_1^* \pi_2} [ r_{t+1} + \sgn_1(s_t) \sgn_1(s_{t+1})
    V(s_{t+1})
    \notag\\&\qquad
    \mid s_t = s]
    \\
    \label{eq:subopt-pi-ineq}
    &\geq \E[r_{t+1} + \sgn_1(s_t) \sgn_1(s_{t+1}) V(s_{t+1})
    \notag\\&\qquad
    \mid s_t = s, a_t = \pi_1(s)] \notag\\
    &= \E_{\pi_1\pi_2}[r_{t+1} + \sgn_1(s_t) \sgn_1(s_{t+1}) V(s_{t+1})
    \notag\\&\qquad
    \mid s_t = s], \quad \forall s \in \scr{S}_1, \forall \pi_1, \pi_2.
  \end{align}
  In contrast, when $s \in \scr{S}_2$, the policy $\pi_2^*(s)$ reaches the
  maximum in~\eqref{eq:bellman-saddle-value}, but an arbitrary policy
  $\pi_2(s)$ may not and we have
  \begin{align}
    \label{eq:subopt-pi-2}
    V(s)
    % &= \E[r_{t+1} + \sgn_1(s_t) \sgn_1(s_{t+1}) V(s_{t+1})
    % \notag\\&\qquad
    % \mid s_t = s, a_t = \pi_2^*(s)] \notag\\
    &= \E_{\pi_1\pi_2^*}[r_{t+1} + \sgn_1(s_t) \sgn_1(s_{t+1}) V(s_{t+1})
    \notag\\&\qquad
    \mid s_t = s] \\
    \label{eq:subopt-pi-2-ineq}
    &\geq
    % \E[r_{t+1} + \sgn_1(s_t) \sgn_1(s_{t+1}) V(s_{t+1})
    % \notag\\&\qquad
    % \mid s_t = s, a_t = \pi_2(s)] \notag\\
    % &=
    \E_{\pi_1\pi_2}[r_{t+1} + \sgn_1(s_t) \sgn_1(s_{t+1}) V(s_{t+1})
    \notag\\&\qquad
    \mid s_t = s], \quad \forall s \in \scr{S}_2, \forall \pi_1, \pi_2.
  \end{align}
  From~\eqref{eq:subopt-pi} with $\pi_2 = \pi_2^*$
  and~\eqref{eq:subopt-pi-2} with $\pi_1 = \pi_1^*$, we obtain
  \begin{align*}
    V(s) &= \E_{\pi_1^*\pi_2^*}[r_{t+1} + \sgn_1(s_t) \sgn_1(s_{t+1})V(s_{t+1})
    \notag\\&\qquad
    \mid s_t = s], \quad \forall s \in \scr{S}
  \end{align*}
  and we conclude from Proposition~\ref{prop:dp-ineq} that
  \begin{align}\label{eq:value-star}
    V(s) = V_{\pi_1^*, \pi_2^*}(s), \quad \forall s \in \scr{S}.
  \end{align}
  Multiplying~\eqref{eq:subopt-pi} by $\sgn_1(s) = 1$,
  $\forall s \in \scr{S}_1$ and combining this equality
  with~\eqref{eq:subopt-pi-2-ineq} multiplied by $\sgn_1(s) = -1$,
  $\forall s \in \scr{S}_2$ and with $\pi_1 = \pi_1^*$, we obtain
  \begin{multline*}
    \sgn_1(s) V(s) \leq \E_{\pi_1^*\pi_2}[r_{t+1} \sgn_1(s_t) \\
    + \sgn_1(s_{t+1}) V(s_{t+1}) \mid s_t = s], \quad \forall s \in \scr{S}.
  \end{multline*}
  In this case, we conclude from~\eqref{eq:value-star} and
  Proposition~\ref{prop:dp-ineq} that
  \begin{multline*}
    \sgn_1(s) V(s) = \sgn_1(s) V_{\pi_1^*, \pi_2^*}(s) \\
    \leq \sgn_1(s) V_{\pi_1^*, \pi_2}(s), \quad \forall s \in \scr{S}.
  \end{multline*}
  and since $\sgn_1(s) = -\sgn_2(s)$, $\forall s \in \scr{S}$, we also
  conclude that
  \begin{multline} \label{eq:value-first}
    \sgn_2(s) V(s)
    = \sgn_2(s) V_{\pi_1^*, \pi_2^*}(s)\\
    \geq \sgn_2(s) V_{\pi_1^*, \pi_2}(s), \quad \forall s \in \scr{S}.
  \end{multline}
  Finally, multiplying~\eqref{eq:subopt-pi-2} by $\sgn_1(s) = -1$,
  $\forall s \in \scr{S}_2$ and combining this equality
  with~\eqref{eq:subopt-pi-ineq} multiplied by $\sgn_1(s) = 1$,
  $\forall s \in \scr{S}_1$ and with $\pi_2 = \pi_2^*$, we obtain
  \begin{multline*}
    \sgn_1(s) V(s) \geq \E_{\pi_1\pi_2^*}[r_{t+1} \sgn_1(s_t) + \\
    \sgn_1(s_{t+1}) V(s_{t+1}) \mid s_t = s], \quad \forall s \in \scr{S}.
  \end{multline*}
  In this case, we conclude from~\eqref{eq:value-star} and
  Proposition~\ref{prop:dp-ineq} that
  \begin{multline} \label{eq:value-second}
    \sgn_1(s) V(s)
    = \sgn_1(s) V_{\pi_1^*, \pi_2^*}(s)\\
    \geq \sgn_1(s) V_{\pi_1, \pi_2^*}(s), \quad \forall s \in \scr{S}.
  \end{multline}
  The saddle-point inequalities~\eqref{eq:saddle-point} follow
  from~\eqref{eq:value-first}, \eqref{eq:value-second}.\frQED
\end{proof}

The following result is just a reformulation of
Lemma~\ref{lemma:dp-saddle} in terms of a so called ``$Q$-function,''
which will provide an explicit formula~\eqref{eq:pi-star-Q} for the
saddle-point policies that can be evaluated without knowing the
transition/reward probability function $p(\cdot|\cdot)$. It will be used
shortly to construct the $Q$-learning algorithm.
\begin{corollary} [Saddle-point sufficient condition using a
  $Q$-function]\label{corr:dp-value} Suppose there
  exists a function $Q : \scr{S} \times \scr{A} \to \mathbb{R}$ for which
  \begin{align}\label{eq:V-from-Q}
    V(s) \eqdef \max_{a \in \scr{A}} Q(s, a), \quad \forall s \in \scr{S},
  \end{align}
  is absolutely summable and
  \begin{multline}\label{eq:maxmin-qfunc}
    Q(s, a) = \E \big[ r_{t+1}
    + \sgn_1(s_t) \sgn_1(s_{t+1}) \max_{a' \in \scr{A}} Q(s_{t+1}, a') \\
    \,\big|\, s_t = s, a_t = a \big], \quad \forall s \in \scr{S},\; a \in \scr{A},
  \end{multline}
  then any pair of policies $(\pi_1^*, \pi_2^*)$ for which
  \begin{align}\label{eq:pi-star-Q}
    \pi_i^*(s) \in \arg\max_{a \in \scr{A}} Q(s, a), \quad \forall s \in \scr{S}_i, i \in \{1,2\}
  \end{align}
  form a (pure) saddle-point and are also security policies.
\end{corollary}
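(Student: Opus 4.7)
The plan is to derive the corollary as a direct consequence of Lemma~\ref{lemma:dp-saddle} by treating $V(s) \eqdef \max_{a \in \scr{A}} Q(s,a)$ as the candidate value function and checking that the hypotheses of the lemma are satisfied.

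First, I would take the maximum over $a \in \scr{A}$ on both sides of the $Q$-fixed-point equation~\eqref{eq:maxmin-qfunc}. Since the right-hand side already contains the term $\max_{a' \in \scr{A}} Q(s_{t+1}, a') = V(s_{t+1})$, this yields
\begin{multline*}
V(s) = \max_{a \in \scr{A}} \E\bigl[r_{t+1} + \sgn_1(s_t)\sgn_1(s_{t+1}) V(s_{t+1}) \\
\mid s_t = s, a_t = a \bigr], \quad \forall s \in \scr{S},
\end{multline*}
which is exactly the Bellman saddle equation~\eqref{eq:bellman-saddle-value} required by Lemma~\ref{lemma:dp-saddle}. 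Absolute summability of $V$ is assumed in the hypothesis.

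Second, I would observe that the arg max set in~\eqref{eq:pi-star-Q} coincides, for every $s \in \scr{S}$, with the arg max set in~\eqref{eq:bellman-policy}: indeed, \eqref{eq:maxmin-qfunc} asserts that $Q(s,a)$ equals precisely the conditional expectation being maximized in~\eqref{eq:bellman-policy}, so picking $\pi_i^*(s) \in \arg\max_a Q(s,a)$ is the same as picking a policy that satisfies~\eqref{eq:bellman-policy}. Applying Lemma~\ref{lemma:dp-saddle} then delivers the saddle-point and security-policy properties, with game value $J_1^*(s_1) = \sgn_1(s_1) V(s_1) = -J_2^*(s_1)$.

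There is no substantive obstacle here; the only subtlety worth flagging is ensuring that the outer $\max$ can be pulled through the expectation in \eqref{eq:maxmin-qfunc} to obtain \eqref{eq:bellman-saddle-value}. This is immediate because the action $a$ appears only in the conditioning of the expectation (not inside the $\max_{a'}$ term), so maximizing the left-hand side over $a$ and maximizing the right-hand side over $a$ are in one-to-one correspondence via the equality~\eqref{eq:maxmin-qfunc} itself. Once this is noted, the rest of the argument is a direct appeal to Lemma~\ref{lemma:dp-saddle}.
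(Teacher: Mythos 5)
Your proposal is correct and follows the same route as the paper: the paper's proof is a one-line observation that $V(s)\eqdef\max_{a\in\scr{A}}Q(s,a)$ satisfies the hypotheses of Lemma~\ref{lemma:dp-saddle} and yields the same policies, and your argument simply spells out the two verifications (that \eqref{eq:maxmin-qfunc} implies \eqref{eq:bellman-saddle-value} upon maximizing over $a$, and that the arg max sets in \eqref{eq:pi-star-Q} and \eqref{eq:bellman-policy} coincide) that the paper leaves implicit.
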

\begin{proof} This result follows directly from the observation that
  the function defined by~\eqref{eq:V-from-Q} satisfies the
  assumptions of Lemma~\ref{lemma:dp-saddle} and leads to the same
  pair of saddle-point policies.\frQED
\end{proof-lemma}

\section{$Q$-learning}\label{sec:q-learning}

$Q$-learning can be used to iteratively construct a function
$Q: \scr{S} \times \scr{A} \rightarrow \mathbb{R}$ that satisfies the fixed-point equation
in~\eqref{eq:maxmin-qfunc}. In the context of zero-sum turn games,
\emph{$Q$-learning} starts from some initial estimate
$\scr{Q}^0:\scr{S} \times \scr{A} \rightarrow \mathbb{R}$ and iteratively draws samples
$(s_t,a_t,s_{t+1},r_{t+1})$ from the transition/reward probability
function $p(s_{t+1},r_{t+1}\mid s_t, a_t)$, each leading to an update of
the form
\begin{align}\label{eq:Q-learning-update-games}
  Q^{k+1}(s_t,a_t)=(1-\alpha_k) Q^k(s_t,a_t)+\alpha_k Q^{k+1}_\mrm{target},
\end{align}
for some sequence $\alpha_k\in(0,1]$ and
\begin{align*}
  Q^{k+1}_\mrm{target}\eqdef r_{t+1}
  +\sgn_1(s_t)\sgn_1(s_{t+1})\max_{a'\in\scr{A}}Q^k(s_{t+1},a').
\end{align*}
The sequence of samples $\{(s_t,a_t,s_{t+1},r_{t+1})\}$ are typically
extracted from traces of multiple episodes of the game. Convergence of
$Q^k$ to a fixed-point of~\eqref{eq:maxmin-qfunc} typically requires
the following \emph{Exploration Assumption} for this sequence:
\begin{assumption}[Exploration]\label{asm:exploration}
  Every pair $(s_t,a_t)\in\scr{S}\times\scr{A}$ appears infinitely often in
  the sequence of samples $\{(s_t,a_t,s_{t+1},r_{t+1})\}$ used in the
  $Q$-learning iteration.\frqed
\end{assumption}
To guarantee convergence, one often also needs the operator from the
space of function $\scr{S}\times\scr{A}\to\R$ to itself, defined by
\begin{multline*}
  Q(s,a) \mapsto \E\Big[r_{t+1}+\\
  \sgn_1(s_t)\sgn_1(s_{t+1})
  \max_{a\in\scr{A}} Q(s_{t+1},a)
  | s_t=s,a_t=a\Big],
\end{multline*}
to be a
contraction~\cite{littman_generalized_1996}. % and $\sum_{k=1}^\infty \alpha_k =0$, $\sum_{k=1}^\infty \alpha_k<\infty$
% \cite{Tsitsiklis1994}.
This typically holds for discounted costs, but it is somewhat harder
to establish for undiscounted costs like~\eqref{eq:reward-i}
\cite{PatekBertsekas99}. However, for deterministic games that
terminate in finite time, like the ones we discuss in
Section~\ref{sec:numerical-results}, convergence of
\eqref{eq:Q-learning-update-games} can be established without further
assumptions.
\begin{lemma}[Deterministic finite games]
  \label{le:finite-games}
  Assume that the Exploration Assumption~\ref{asm:exploration} holds
  and that the game is deterministic and terminates in finite time.
  Setting $\alpha_k=1$, $\forall k\ge 0$ and initializing
  \begin{align*}
    Q^0(s,a)=0, \quad \forall s\in\scr{S},\;a\in\scr{A},
  \end{align*}
  the sequence of functions $Q^k$ converges in a finite number of
  iterations to a function $Q^*$ for which the sufficient condition
  for optimality~\eqref{eq:maxmin-qfunc} holds.  \frqed
  % that satisfies
  % \begin{align}\label{eq:fixed-point}
  %   Q^*(s_t,a_t)=r_{t+1}
  %   +\sgn_1(s_t)\sgn_1(s_{t+1})\max_{a'\in\scr{A}}Q^*(s_{t+1},a').
  % \end{align}
\end{lemma}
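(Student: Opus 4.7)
The plan is to exploit determinism together with finite termination to stratify the state space into layers and prove by backward induction that $Q^k$ becomes exactly correct on each layer after finitely many iterations. Because the game is deterministic, the expectation in~\eqref{eq:maxmin-qfunc} disappears, and with $\alpha_k=1$ the update reduces to the pure overwrite
\begin{align*}
  Q^{k+1}(s_t,a_t)=r_{t+1}+\sgn_1(s_t)\sgn_1(s_{t+1})\max_{a'\in\scr{A}}Q^k(s_{t+1},a'),
\end{align*}
since $(s_{t+1},r_{t+1})$ is a deterministic function of $(s_t,a_t)$.

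Next I set up the layers. Let $\scr{Z}_0\subseteq\scr{S}$ be the largest subset closed under every transition and on which every transition produces zero reward; equivalently, $s\in\scr{Z}_0$ iff no future reward can ever occur starting from $s$, under any sequence of actions. Using stationarity together with the finite-termination hypothesis applied with each state as initial, $\scr{Z}_0$ is forward-invariant and every state reaches $\scr{Z}_0$ within $T-1$ deterministic transitions. I then recursively define $\scr{Z}_\ell\eqdef\{s\in\scr{S}:\forall a\in\scr{A},\text{ the deterministic successor }s'\in\scr{Z}_{\ell-1}\}$ for $\ell\ge 1$; these form a nested chain $\scr{Z}_0\subseteq\scr{Z}_1\subseteq\cdots\subseteq\scr{Z}_{T-1}=\scr{S}$. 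Using this layering I define a candidate $Q^*$ layer-by-layer: set $Q^*(s,a)\eqdef 0$ for $s\in\scr{Z}_0$, and for $s\in\scr{Z}_\ell\setminus\scr{Z}_{\ell-1}$ set $Q^*(s,a)\eqdef r+\sgn_1(s)\sgn_1(s')\max_{a'}Q^*(s',a')$ using the already-defined $s'\in\scr{Z}_{\ell-1}$. By construction $Q^*$ satisfies~\eqref{eq:maxmin-qfunc} pointwise.

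The heart of the argument is an induction on $\ell$ claiming: there exists a finite iteration index $K_\ell$ such that $Q^k(s,a)=Q^*(s,a)$ for every $s\in\scr{Z}_\ell$, $a\in\scr{A}$, and every $k\ge K_\ell$. The base case $\ell=0$ is immediate: $Q^0\equiv 0$, and any update of $(s,a)$ with $s\in\scr{Z}_0$ evaluates to $0+\sgn_1(s)\sgn_1(s')\max_{a'}Q^k(s',a')$ with $s'\in\scr{Z}_0$ by forward-invariance, so a simple induction on $k$ yields $Q^k\equiv 0$ on $\scr{Z}_0\times\scr{A}$. For the inductive step, assume the claim at layer $\ell-1$. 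For any $s\in\scr{Z}_\ell$ with deterministic successor $s'\in\scr{Z}_{\ell-1}$, at every iteration $k+1>K_{\ell-1}$ the successor values $Q^k(s',\cdot)$ equal $Q^*(s',\cdot)$, so any update of $(s,a)$ installs the correct value $Q^*(s,a)$ and is never perturbed thereafter. Assumption~\ref{asm:exploration} together with finiteness of $\scr{S}\times\scr{A}$ guarantees that each pair in $\scr{Z}_\ell\times\scr{A}$ is sampled at some finite iteration past $K_{\ell-1}$; taking $K_\ell$ to be the maximum such first-visit time completes the step. Applied at $\ell=T-1$, this yields finite-iteration convergence on all of $\scr{S}\times\scr{A}$ to the function $Q^*$, which satisfies~\eqref{eq:maxmin-qfunc}.

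The main obstacle is the structural step that bridges the ``terminates in finite time'' hypothesis and the layered construction: the hypothesis is stated in terms of reward truncation from a single initial state, so I must invoke stationarity and determinism explicitly to conclude that from any state the deterministic trajectory enters $\scr{Z}_0$ within $T-1$ steps, hence $\scr{Z}_{T-1}=\scr{S}$. Once the layers are in place the induction is essentially a clean $Q$-learning realization of backward dynamic programming, made tractable by the three simplifying features of this setting: $\alpha_k=1$ (pure overwrite, no blending with stale values), $Q^0\equiv 0$ (a correct base layer at no extra cost), and finiteness of $\scr{S}\times\scr{A}$ (which promotes ``infinitely often'' in Assumption~\ref{asm:exploration} into ``after some finite iteration'').
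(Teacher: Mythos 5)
Your proof is correct, and it follows the same high-level strategy as the paper's: exploit $\alpha_k=1$ and determinism to turn each update into an exact overwrite, anchor the argument on a zero-valued absorbing set, and run a backward induction outward from it, with the Exploration Assumption converting ``infinitely often'' into ``by some finite iteration'' via finiteness of $\scr{S}\times\scr{A}$. The difference is in the decomposition used to organize the induction. The paper splits $\scr{S}$ into recurrent and transient states, argues that recurrent states must carry zero reward (else finite termination would be violated by infinite revisits), and then topologically sorts the acyclic graph of transient states, inducting down that order from the ``largest'' state. You instead stratify by guaranteed hitting time of the zero-future-reward set $\scr{Z}_0$, with $\scr{Z}_\ell$ the states all of whose one-step successors lie in $\scr{Z}_{\ell-1}$, and induct on $\ell$. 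Your layering buys an explicit depth bound (at most $T$ layers, so at most $T$ sweeps of full exploration suffice) and sidesteps the graph-theoretic machinery; it also makes the base case airtight, since $\scr{Z}_0$ is forward-invariant by construction, whereas the paper's claim that $Q^k$ stays zero on recurrent states implicitly relies on recurrent states only transitioning to recurrent states. The paper's recurrent/transient decomposition, in exchange, identifies the minimal set of states on which the zero initialization is actually needed (as noted in its Remark on initialization). Two minor points you may wish to tighten: the index in $\scr{Z}_{T-1}=\scr{S}$ is off by one given that $r_{t+1}$ is the reward generated by the action at time $t$ (any finite index suffices, so nothing breaks); and you should state explicitly that the finite-termination property is being applied uniformly over initial states, which you do flag but which the paper's statement leaves implicit as well.
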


% The statement of Lemma~\ref{le:finite-games} is ``weak'' in the sense
% that it establishes convergence, but not unicity of the fixed point
% $Q^*$ in~\eqref{eq:maxmin-qfunc}. This is to be expected, as we are
% not assuming any form of strict contraction as in
% \cite{patek_stochastic_1999}; but suffices for the purposes of this
% paper since Corollary~\ref{corr:dp-value} does not assume unicity of
% the fixed-point.

The following argument used to prove Lemma~\ref{le:finite-games} bears
similarity to that in~\cite{watkins_technical_1992} for single-player
problems, but the argument we present below is self-contained and
specialized to the setting we consider here.
\begin{proof-lemma}{\ref{le:finite-games}}
  Let $\scr{S}_\mrm{recurrent}\subset\scr{S}$ denote the set of
  \emph{recurrent states}, i.e., the set of states $s\in\scr{S}$ for
  which there is a finite sequence of actions that takes the system
  from the state $s$ to some state in $\scr{S}_\mrm{recurrent}$ with
  probability one. The set
  $\scr{S}_\mrm{transient}\eqdef \scr{S}\setminus\scr{S}_\mrm{recurrent}$ of
  states that are not recurrent are called \emph{transient}. We
  define
  \begin{align*}
    Q^*(s,a)=0, \quad \forall s\in\scr{S}_\mrm{recurrent}, \; a\in\scr{A}.
  \end{align*}
  To complete the construction of the function of $Q^*(\cdot)$ to which
  $Q^k(\cdot)$ converges, consider a directed graph $\scr{G}$ whose nodes
  are the transient states in $\scr{S}_\mrm{transient}$, with an edge
  from $s\in\scr{S}_\mrm{transient}$ to $s'\in\scr{S}_\mrm{transient}$ if
  there is an action $a\in\scr{A}$ for which a transition from $s$ to
  $s'$ is possible, i.e.,
  \begin{align*}
    \exists a\in\scr{A}, r\in\scr{R}: p(s',r|s,a)=1.
  \end{align*}
  This graph cannot have cycles because any node in a cycle would be
  recurrent and thus not in $\scr{S}_\mrm{transient}$. The absence of
  cycles guarantees that $\scr{G}$ has at least one topological
  ordering $\prec$, i.e., we can order the transient states
  $\scr{S}_\mrm{transient}$ so that $s\prec s'$ if and only if the state
  $s'$ can be reached from $s$~\cite{cormen_introduction_2009}.

  \medskip

  First note that every recurrent state $s\in\scr{S}_\mrm{recurrent}$
  must have zero reward, i.e.,
  \begin{align*}
    &p(s',r|s,a)=0, \quad \forall s\in\scr{S}_\mrm{recurrent}, a\in\scr{A},r\neq 0.
  \end{align*}
  This is because all these states can be visited infinitely many
  times, which would contradict the fact that the reward must be zero
  after time $T$. This means that the update
  rule~\eqref{eq:Q-learning-update-games} will keep
  \begin{align}\label{eq:recurrent-zero-games}
    Q^k(s,a)=0, \quad \forall s\in\scr{S}_\mrm{recurrent},\;a\in\scr{A}, \;k\ge 0.
  \end{align}
  Let $a$ be an arbitrary action in $\scr{A}$ and
  $s_{\max}\in\scr{S}_\mrm{transient}$ the ``largest'' state in
  $\scr{S}_\mrm{transient}$ with respect to the order $\prec$, i.e.,
  $s_{\max}\in\scr{S}_\mrm{transient}$ the ``largest'' state in
  $s_{\max}\succ s$, $\forall s\in\scr{S}_\mrm{transient}\setminus \{s\}$.

  \medskip

  In view of the Exploration Assumption~\ref{asm:exploration}, the
  value of $Q(s_{\max},a)$ will eventually be updated at some
  iteration $k$ using~\eqref{eq:Q-learning-update-games}. Moreover,
  since $s_{\max}$ is the ``largest'' state, it cannot transition to
  any transient state so it must necessarily transition to a recurrent
  state. In view of~\eqref{eq:recurrent-zero-games}, the update
  in~\eqref{eq:Q-learning-update-games} with $\alpha=1$ will necessarily be
  of the form
  \begin{align}\label{eq:Q-largest-games}
    Q^{k+1}(s_{\max},a)=r_{t+1}\defeq Q^*(s_{\max},a),
  \end{align}
  where $r_{t+1}$ is the (deterministic) reward arising from state
  $s_t=s_{\max}$ and action $a_t=a$. This means that
  $Q^k(s_{\max},a)$, $\forall a\in\scr{A}$ will converge to the values
  $Q^*(s_{\max},a)$ defined in~\eqref{eq:Q-largest-games}, right after
  their first update.

  \medskip

  We now use an induction argument to show that the same will happen
  for every other $s\in\scr{S}_\mrm{transient}$ and every
  $a\in\scr{S}$. To this effect, consider some
  $s\in\scr{S}_\mrm{transient}$ and assume that at some iteration $k$ at
  which $Q^k(s,a)$ will be updated, $Q^k(s',a)$ has already converged
  to $Q^*(s',a)$, for all $s'\succ s$, $a\in\scr{A}$. At this (and at any
  subsequent update for $s,a$), the update
  in~\eqref{eq:Q-learning-update-games} with $\alpha=1$ will necessarily be
  of the form
  \begin{align*}
    Q^{k+1}(s,a)
    =r_{t+1}+\sgn_1(s)\sgn_1(s_{t+1})\max_{a'\in\scr{A}}Q^k(s_{t+1},a'),
  \end{align*}
  where $r_{t+1}$ and $s_{t+1}\succ s$ are the (deterministic) reward and
  next state, respectively, arising from state $s_t=s$ and action
  $a_t=a$, which by the induction hypothesis means that
  \begin{align}
    \begin{split}
      &Q^{k+1}(s,a)=r_{t+1}+ \\
      &\sgn_1(s)\sgn_1(s_{t+1})\max_{a'\in\scr{A}}Q^*(s_{t+1},a')
      \defeq Q^*(s,a).
    \end{split}
  \end{align}
  This confirms that $Q^k(s,a)$ will converge right after this
  update; completing the induction argument and, recursively, defining
  $Q^*(\cdot)$ for every transient state. \frQED
\end{proof-lemma}

\begin{remark}[Initialization]
  The initialization of $Q^0(s,a)=0$ is actually not needed for
  \emph{every} state $s\in\scr{S}$. An inspection of the proof reveals
  that it is only needed for ``recurrent states,'' i.e., states to
  which the system can return to with probability one by some
  selection of actions.\frqed
\end{remark}

% \begin{theorem}[Convergence of max-min $Q$-learning]\label{theorem:conv}
%   Under Assumption~\ref{asm:proper-policies}, the $Q$-learning algorithm given by the update rule
%   \begin{align}
%     \begin{split} \label{eq:maxmin-update}
%       &Q'(s_t, a_t)= Q(s_t,a_t)+ \alpha_t \big[r_{t+1} \\
%       & + \sgn_1(s_t) \sgn_1(s_{t+1}) \max_{a' \in \scr{A}} Q(s_{t+1}, a') - Q(s_t, a_t)\big]
%     \end{split}
%   \end{align}
%   converges w.p. 1 to the optimal Q-function as long as $\sum_{t=1}^\infty \alpha_t = \infty$ and $\sum_{t=1}^\infty \alpha_t^2 < \infty$.
% \end{theorem}
%
% \begin{proof}[Proof of Theorem~\ref{theorem:conv}]
%   The proof adapts the one laid out by
%   \cite{tsitsiklis_asynchronous} with the main consideration being
%   for the novel sign switching in the
%   Q-factor. Writing~\eqref{eq:bellman-saddle-value} in terms of the
%   dynamic programming operator $T$ with
%   \begin{align}
%     (TV)(s) = \max_{a \in \scr{A}} \E[r_{t+1} + \sgn_1(s)\sgn_1(s) V(s_{t+1}) \mid s_\tau = s]
%   \end{align}
%   We work in the case where we do not have a discount factor--equivalently it equals 1--such that $T$ is generally not a contraction. However, under Assumption~\ref{asm:proper-policies}, the set $\{V \in \mathbb{R}^{|\scr{S}|} \mid V(\tilde{s}) = 0\}$ contains a unique fixed point of $T$~\cite{bertsekas_analysis_1991}.
%   TODO: finish
% \end{proof}

\section{Impact of limited computation} \label{sec:fragility}

In games with very large state spaces $\scr{S}$, it is generally not
possible for the Exploration Assumption~\ref{asm:exploration} to hold
for every pair $(s_t,a_t)$ in $\scr{S}\times\scr{A}$. To try to overcome
this difficulty, it is common to progressively decrease exploration
and focus on exploitation. In the context of $Q$-learning,
\emph{exploitation} corresponds to selecting samples
$(s_t,a_t,s_{t+1},r_{t+1})$ with $a_t$ chosen based on the current
estimate $Q^k(\cdot)$, i.e.,
\begin{align} \label{eq:best-resp}
  a_t \in \argmax_{a\in\scr{A}} Q^k(s_t,a).
\end{align}
\emph{Exploration} would instead select the $a_t$ randomly among the
whole set $\scr{A}$ to make sure that all actions and reachable states
are visited, including those that look ``bad'' in view of the current
value of $Q^k(\cdot)$.
By progressively focusing the sampling on exploitation, the $Q$-learning
algorithm can terminate at some finite iteration $K$, with the
Exploration Assumption~\ref{asm:exploration} holding on some subset
$\scr{S}^K\subset\scr{S}$ that is ``invariant'' for the final policies
\begin{align}\label{eq:pi-K}
  \pi_i^K(s) &\in \arg\max_{a \in \scr{A}} Q^K(s, a), \quad \forall s \in \scr{S}_i, i \in \{1,2\},
\end{align}
where ``invariant'' means that the state will remain in $\scr{S}^K$ with
probability 1 under these policies. In practice, this means that the
fixed-point equation~\eqref{eq:maxmin-qfunc} can only hold for states
in $\scr{S}^K$.

In the case of a single player, say player $\P1$ (corresponding to
$\scr{S}_2=\emptyset$), we could adapt the proof of
Proposition~\ref{prop:dp-ineq} to conclude that, when the fixed-point
equation~\eqref{eq:maxmin-qfunc} holds over $\scr{S}^K$, the policy
$\pi_1^K$ in~\eqref{eq:pi-K} would lead to the cost
\begin{align*}
  J_1^K(s_1)=\sum_{t=1}^\infty\E_{\pi_1^K}[r_{t+1}]= \max_{a\in\scr{A}} Q^K(s_1,a).
\end{align*}
This shows that, while $\pi_1^K$ may not be optimal, for a single player
$\P1$, the final $Q^K(\cdot)$ still provides an accurate estimate for the
cost associated with the corresponding policy $\pi_1^K$. However, this
reasoning does not extend to two players, in which case $Q^K(\cdot)$ does
not provide a bound on the security levels~\eqref{eq:security-policy} and
we may have
\begin{align*}
  &%J_1^*(s_1)\lesseqgtr
  \sgn_1(s_1)\max_{a\in\scr{A}} Q^K(s_1,a)
  \lesseqgtr\min_{\pi_2} \sgn_1(s_1)V_{\pi_1^K,\pi_2}(s_1)\\
  &%J_2^*(s_1)\lesseqgtr
  \sgn_2(s_1)\max_{a\in\scr{A}} Q^K(s_1,a)
  \lesseqgtr\min_{\pi_1} \sgn_2(s_1)V_{\pi_1,\pi_2^K}(s_1).
\end{align*}
The key difficulty in 2-player games is that, even if we get the
fixed-point equation~\eqref{eq:maxmin-qfunc} to hold over a subset
$\scr{S}^K$ of $\scr{S}$ that is invariant for the final policies in
\eqref{eq:pi-K}, the opponent may discover policies that take the
state outside $\scr{S}^K$. Opponents can then exploit this to achieve
rewards not ``accounted for'' in $Q^K(\cdot)$. In practice, this means
that the policies~\eqref{eq:pi-K} may have weaknesses that cannot be
inferred from $Q^K(\cdot)$.

\medskip

The following result shows how one can ``protect'' the policy
\eqref{eq:pi-K} against a fixed set of policies by guaranteeing that
$Q$-learning explores a sufficiently large subset of the state-space
$\scr{S}$. To formalize this result we need the following definition:
given sets of policies $\Pi_1=\{\pi_1:\scr{S}_1\to\scr{A}\}$ and
$\Pi_2=\{\pi_2:\scr{S}_2\to\scr{A}\}$ for players $\P1$ and $\P2$,
respectively, we say that a subset $\scr{S}^\perp$ of $\scr{S}$ is
\emph{invariant for $\Pi_1$, $\Pi_2$} if for every initial state $s_1$ in
$\scr{S}^\perp$, all subsequent states remain in $\scr{S}^\perp$ with
probability one for all the policies in $\Pi_1$, $\Pi_2$.
\begin{corollary} \label{corr:restricted-security} Suppose there
  exists a function $Q : \scr{S} \times \scr{A} \to \mathbb{R}$ and a set
  $\scr{S}^\perp\subset\scr{S}$ for which~\eqref{eq:V-from-Q} is absolutely
  summable and~\eqref{eq:maxmin-qfunc} holds
  $\forall s\in \scr{S}^\perp$, $a \in \scr{A}$. If $\scr{S}^\perp$ is invariant for
  sets of policies $\Pi_1$, $\Pi_2$ that include the \emph{exploitation}
  policies
  \begin{align}\label{eq:pi-perp-Q}
    \pi_i^\perp(s) \in \arg\max_{a \in \scr{A}} Q(s, a), \quad \forall s \in \scr{S}_i, i \in \{1,2\}
  \end{align}
  then $(\pi_1^\perp, \pi_2^\perp)$ are security policies restricted to the sets
  of policies $\Pi_1$, $\Pi_2$, with values $\sgn_1(s) V(s)$ and
  $\sgn_2(s) V(s)$ for the players $\P1$ and $\P2$, respectively;
  specifically
  \begin{align} \label{eq:rest-values}
    \sgn_1(s) V(s)
    = \min_{\pi_2\in\Pi_2} \sgn_1(s_1) V_{\pi_1^\perp, \pi_2}(s_1)\\
    \sgn_2(s) V(s)
    = \min_{\pi_1\in\Pi_1} \sgn_2(s_1) V_{\pi_1^\perp, \pi_2}(s_1).
    \tag*\frqed
  \end{align}
\end{corollary}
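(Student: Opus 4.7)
The plan is to mirror the argument in Lemma~\ref{lemma:dp-saddle} (equivalently Corollary~\ref{corr:dp-value}), but restricted to the invariant set $\scr{S}^\perp$ and to opponent policies drawn from $\Pi_1, \Pi_2$. The crucial observation is that the fixed-point equation~\eqref{eq:maxmin-qfunc} — which is only assumed to hold on $\scr{S}^\perp$ — nonetheless applies at every state visited along any trajectory that starts in $\scr{S}^\perp$ and is generated by a policy pair in $\Pi_1 \times \Pi_2$, because such trajectories remain in $\scr{S}^\perp$ almost surely by the invariance hypothesis.

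First, I would set $V(s) \eqdef \max_{a \in \scr{A}} Q(s,a)$ and observe that~\eqref{eq:maxmin-qfunc} on $\scr{S}^\perp$ implies the Bellman-type identity~\eqref{eq:bellman-saddle-value} pointwise on $\scr{S}^\perp$, with the exploitation policies $\pi_i^\perp$ from~\eqref{eq:pi-perp-Q} achieving the maximum. Repeating the reasoning around \eqref{eq:subopt-pi}--\eqref{eq:subopt-pi-2-ineq}, for every $s \in \scr{S}^\perp$ the value $V(s)$ equals the one-step conditional expectation of $r_{t+1} + \sgn_1(s_t)\sgn_1(s_{t+1}) V(s_{t+1})$ under $(\pi_1^\perp, \pi_2^\perp)$, and provides the correct one-sided bound when either player unilaterally substitutes an arbitrary policy.

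Second, I would adapt Proposition~\ref{prop:dp-ineq} to the restricted setting. The telescoping-sum argument iterates the pointwise identity along a sample path and therefore remains valid provided every visited state lies where the fixed-point holds. Invariance of $\scr{S}^\perp$ under $\Pi_1, \Pi_2$ supplies exactly this guarantee: starting from $s \in \scr{S}^\perp$ and applying any $(\pi_1, \pi_2) \in \Pi_1 \times \Pi_2$, the entire trajectory stays in $\scr{S}^\perp$ with probability one, so the telescoping sum converges to $V_{\pi_1, \pi_2}(s)$ and delivers the restricted analogues of~\eqref{eq:V-2-inequality} and~\eqref{eq:V-2-inequality-2}.

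Third, combining these pieces as in the final block of Lemma~\ref{lemma:dp-saddle}, the suboptimality of any $\pi_1 \in \Pi_1$ at states in $\scr{S}_1 \cap \scr{S}^\perp$ together with the restricted Proposition yields $\sgn_1(s) V(s) \le \sgn_1(s) V_{\pi_1^\perp, \pi_2}(s)$ for every $\pi_2 \in \Pi_2$, and symmetrically $\sgn_2(s) V(s) \le \sgn_2(s) V_{\pi_1, \pi_2^\perp}(s)$ for every $\pi_1 \in \Pi_1$; taking minima delivers~\eqref{eq:rest-values}. The main obstacle — and the only substantive departure from the earlier proof — is verifying that the telescoping argument never escapes $\scr{S}^\perp$; this reduces to invoking invariance, but must be applied to the \emph{mixed} pairs $(\pi_1^\perp, \pi_2)$ and $(\pi_1, \pi_2^\perp)$ rather than to $(\pi_1, \pi_2)$ alone, which is admissible precisely because the corollary places $\pi_i^\perp$ in $\Pi_i$ by construction.
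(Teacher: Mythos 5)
Your proposal is correct and follows essentially the same route as the paper: it adapts Proposition~\ref{prop:dp-ineq} to hold when the fixed-point equation is only valid on the invariant set $\scr{S}^\perp$, then reuses the steps of Lemma~\ref{lemma:dp-saddle} leading to~\eqref{eq:value-first} and~\eqref{eq:value-second} with the opponent restricted to $\Pi_1$, $\Pi_2$. Your explicit remark that invariance must be invoked for the mixed pairs $(\pi_1^\perp,\pi_2)$ and $(\pi_1,\pi_2^\perp)$ --- admissible because $\pi_i^\perp\in\Pi_i$ by hypothesis --- is a point the paper leaves implicit, and it is exactly the right thing to check.
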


\begin{proof}
  To prove this result, we first observe that
  Proposition~\ref{prop:dp-ineq} still holds
  if~\eqref{eq:DP-policy-sufficiency} only holds over
  $\forall s\in\scr{S}^\perp$, as long as $\scr{S}$ is invariant for the policies
  $\pi_1$, $\pi_2$. This enable us to ``reuse'' the steps of proof of
  Lemma~\ref{lemma:dp-saddle} that were used to
  derive~\eqref{eq:value-first} and~\eqref{eq:value-second}, as long
  as we restrict our attention to policies in $\Pi_1$ and
  $\Pi_2$. Specifically, we conclude that
   \begin{align*}
    \sgn_1(s) V(s)
    = \sgn_1(s) V_{\pi_1^\perp, \pi_2^\perp}(s)
    \geq \sgn_1(s) V_{\pi_1, \pi_2^\perp}(s),\\
    \sgn_2(s) V(s)
    = \sgn_2(s) V_{\pi_1^\perp, \pi_2^\perp}(s)
    \geq \sgn_2(s) V_{\pi_1^\perp, \pi_2}(s),
  \end{align*}
  $\forall s\in\scr{S}^\perp$, $\forall \pi_1\in\Pi_1, \pi_2\in\Pi_2$. This shows that the pair
  $(\pi_1^\perp,\pi_2^\perp)$ is a saddle-point \emph{restricted to the sets of
    policies $\Pi_1$, $\Pi_2$} and therefore also security policies also
  restricted to the same set, which means that
  \begin{align*}
    \sgn_1(s_1) V(s)
    %= \max_{\pi_1\in\Pi_1} \min_{\pi_2\in\Pi_2} \sgn_1(s_1) V_{\pi_1, \pi_2}(s_1)
    = \min_{\pi_2\in\Pi_2} \sgn_1(s_1) V_{\pi_1^\perp, \pi_2}(s_1)\\
    \sgn_2(s_2) V(s)
    %= \max_{\pi_2\in\Pi_2} \min_{\pi_1\in\Pi_1} \sgn_2(s_1) V_{\pi_1, \pi_2}(s_1)
    = \min_{\pi_1\in\Pi_1} \sgn_2(s_1) V_{\pi_1^\perp, \pi_2}(s_1),
  \end{align*}
  from which the result follows.\frQED
\end{proof}

\section{Exploration policies}\label{sec:exploration}

$Q$-learning is an ``off-policy'' learning algorithm, meaning that
convergence to the fixed point does not rely on the specific algorithm
used to select the actions that appear in the sequence of samples
$\{(s_t,a_t,s_{t+1},r_{t+1})\}$ used for the iteration, as long as the
Exploration Assumption~\ref{asm:exploration} holds. Motivated by
Corollary~\ref{corr:restricted-security}, we now propose an algorithm
that guarantees protection against a given set of policies, when full
exploration is not possible.

\subsection{Opponent-informed exploration}\label{sec:opp-inf}

The proposed \emph{opponent-informed exploration} algorithm starts
with sets of policies $\Pi_1^\mrm{protect}$ and $\Pi_2^\mrm{protect}$, for
players $\P1$ and $\P2$, respectively, against which we would like to
``protect.'' Such policies typically arise from heuristics developed
by expert players.

The $Q$-learning samples $\{(s_t,a_t,s_{t+1},r_{t+1})\}$ are then
collected by simulating many sequences of games, each called an
\emph{episode}. To construct the simulation of the $\ell$th episode, we
pick a pair of policies $(\pi_1^\ell,\pi_2^\ell)$ from the set
\begin{multline}\label{eq:policy-sets}
  \Big\{\pi_1 \in \Pi_1^\mrm{protect}, \pi_2=\pi_2^k\Big\}
  \cup \Big\{\pi_1=\pi_1^k, \pi_2 \in \Pi_2^\mrm{protect} \Big\} \\
  \cup \Big\{ \pi_1 = \pi_1^k, \pi_2 = \pi_2^k \Big\},
\end{multline}
where $\pi_1^k,\pi_2^k$ are the exploitation policies~\eqref{eq:best-resp}
for the current estimate $Q^k(\cdot)$ of the $Q$-function at the start of
the episode $\ell$. Upon simulation of the full episode, the
$Q$-function is updated using~\eqref{eq:Q-learning-update-games} for
all the samples collected during the episode, eventually leading to a
new estimate $Q^{k+L_\ell}(\cdot)$, where $L_\ell$ denotes the number of samples
collected during the $\ell$th episode. The next episode should pick a
different pair of policies in~\eqref{eq:policy-sets}, typically in a
round-robin fashion.

\medskip

To apply Corollary~\ref{corr:restricted-security}, the exit condition
must guarantee that we can only terminate at an iteration $K$ if the
set of states $\scr{S}^K$ visited by all the $Q$-learning samples
$\{(s_t,a_t,s_{t+1},r_{t+1})\}$ up to iteration $K$ is invariant for
the policies in~\eqref{eq:policy-sets}, which include
$\Pi_1^\mrm{protect}$, $\Pi_2^\mrm{protect}$ and the corresponding
exploitation policies~\eqref{eq:best-resp}. In practice, this means
that we can only exit when (i) the set $\scr{S}^K$ stopped growing
(and is thus invariant) and (ii) the function $Q^K(\cdot)$ stopped
changing (so we reached a fixed point within $\scr{S}^K$). For
deterministic games with $\alpha_k=1$ this is straightforward to check,
whereas for stochastic games we will have to ``settle'' for
$Q^K(\cdot)$ changing by less than a smaller tolerance $\epsilon$. We will see in
Section~\ref{sec:numerical-results} that, even for games with very
large state-spaces, this apparently very strict stopping condition can
hold after a relatively small number of episodes, which \emph{explored
  only a very small fraction of the state-space.} The price to pay is
that, while Corollary~\ref{corr:restricted-security} guarantees
protection against $\Pi_1^\mrm{protect}$ and $\Pi_2^\mrm{protect}$, it
neither guarantees protection against all policies, nor does it
guarantee that we cannot find policies that provide ``better''
protection.

\subsection{Tempered exploitation} \label{sec:temp-exp}

Improving the ``quality'' of the policies obtained upon termination of
the $Q$-learning iteration requires enlarging the set of terminal
states $\scr{S}^K$. To accomplish this, we add to the set of policies
in~\eqref{eq:policy-sets}, \emph{tempered exploration Boltzmann
  policies}, which are stochastic policies that pick an action
$a\in\scr{A}$ using the following Boltzmann distribution
\begin{align*}
  \Prob(a_t=a\;|\;s_t=s)=\frac{e^{\beta Q^k(s,a)}}{Q(\beta)}, \quad\forall a\in\scr{A}, s\in\scr{S},
\end{align*}
with $Q(\beta)$ selected so that the probabilities add to one, and the
constant $\beta\ge0$ is called the \emph{temperature parameter.} For
$\beta=0$, all actions are picked with equal probability, but for
$\beta>0$ actions with larger value of $Q^k(s,a)$ are picked with higher
probability. In fact, as $\beta\to\infty$ the Boltzmann policy converges to the
exploitation policy~\eqref{eq:best-resp}. The terminology ``tempered''
refers to the fact that we include in~\eqref{eq:policy-sets} Boltzmann
policies with multiple temperatures from very low values of $\beta$ (pure
exploration) to very high values of $\beta$ (essentially pure
exploitation). This is inspired by the use of multiple temperatures in
the simulation of spin-glass model~\cite{Swendsen1986replica} and
permits a convenient trade-off between exploration and
exploitation. In practice, tempering enlarges the set $\scr{S}^K$ of
visited states, but tempered exploration must eventually be disabled
so that $\scr{S}^K$ stops growing; otherwise the exit condition for
opponent-informed exploration will only hold when the full state-space
has been explored.

% In order to introduce more exploration to enlarge $\scr{S}^K$, we
% propose an exploration policy based on Boltzmann distributions. A
% purely random action selection could be used in place of this for the
% purposes of enlarging $\scr{S}^K$, but the Boltzmann distribution
% provides a convenient trade-off between exploration and exploitation:
% action selection with high temperatures is closer to uniform sampling
% and encourages exploration, whereas low temperatures emphasize
% exploitation with the minimum temperature of zero corresponding to a
% best response~\eqref{eq:best-resp}.

% Unlike Opponent-Informed Exploration, this procedure does not protect
% against sets of policies; however, we can include Tempered
% Exploitation as a policy in $\Pi_1,\Pi_2$ to improve the saddle-point.

\section{Numerical results} \label{sec:numerical-results}

\begin{figure}[t]
  \begin{center}
    \includegraphics[width=6.25cm]{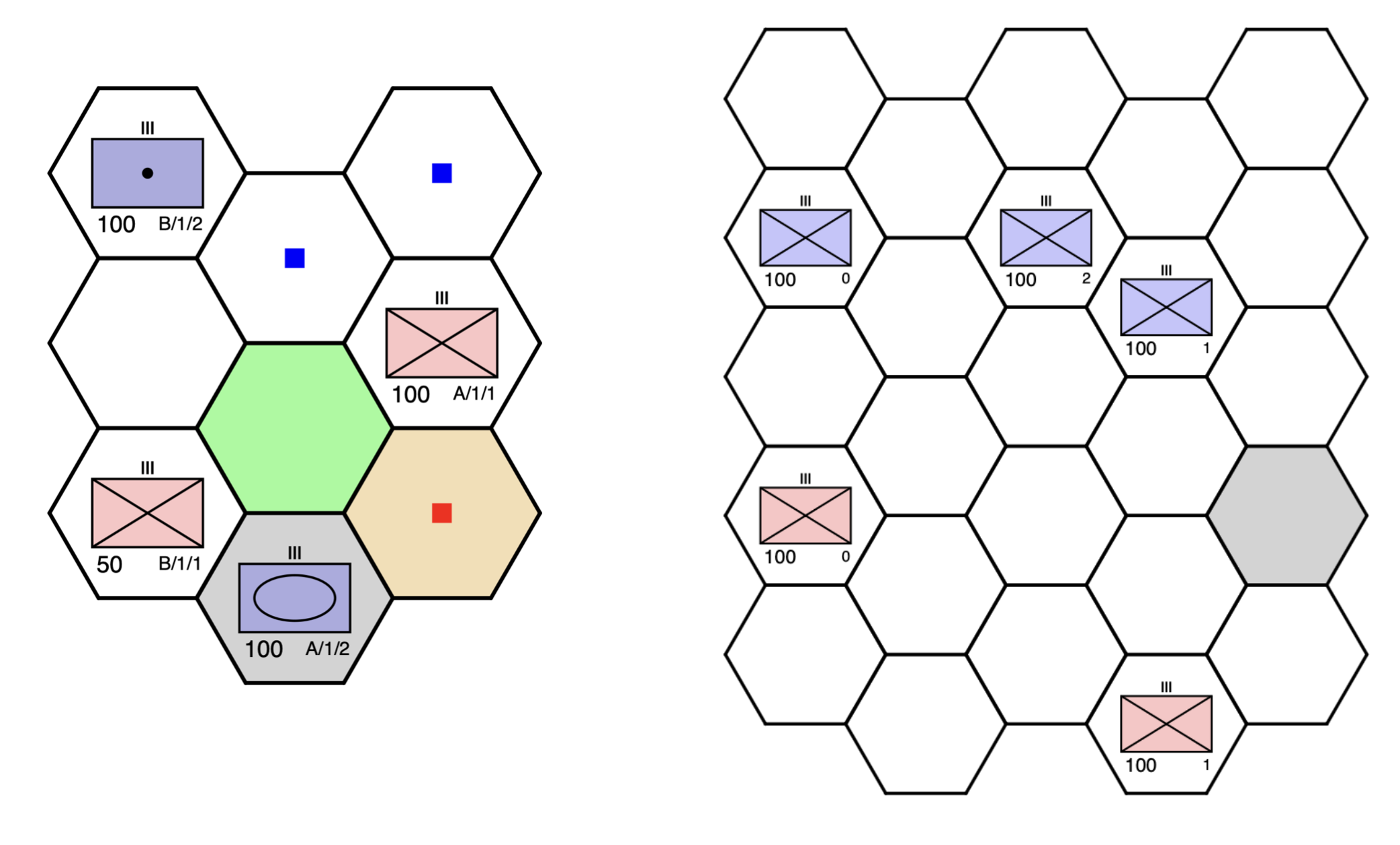} % The printed column width is 8.4 cm.
    \caption{The left-hand board corresponds to the ``2v2" scenario with one blue artillery (top left), one blue armor (bottom center), and 2 red infantry. The color green indicates a marsh, the beige indicates rough, and grey indicates an urban area. The right-hand side depicts the ``city-inf-5" scenario. While ``2v2" has just over 13 thousand states, ``city-inf-5" has around 1 billion.}
    \label{fig:atlatl}
  \end{center}
\end{figure}

\begin{figure*}[t]
  \centering
  \includegraphics[width=15.8cm]{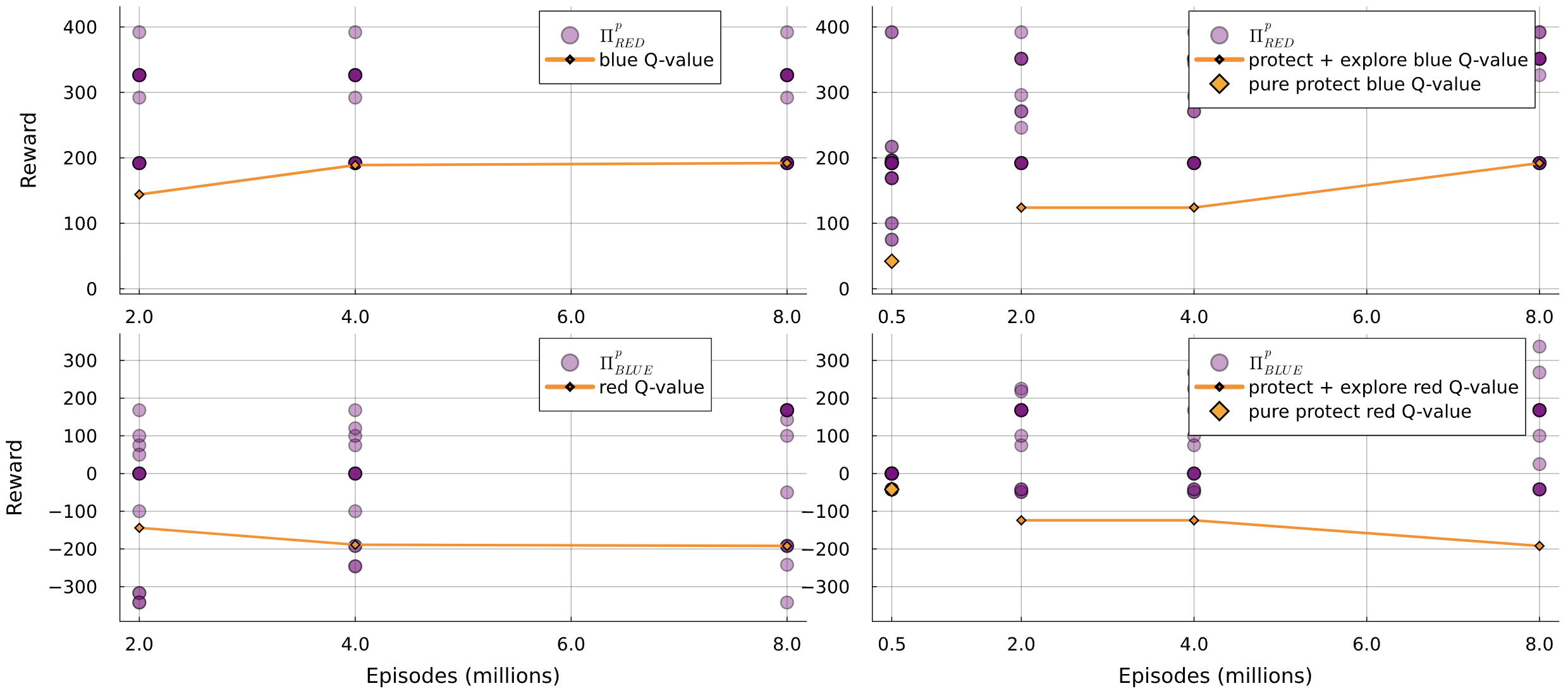} % Adjust width as needed
  \caption{The left-hand plots show the reward for $Q$-learning as a
    function of training episodes when using a ``unprotected" exploration
    procedure with the top plots showing the blue team's performance
    and the bottom plot showing the red team's. The right-hand plots
    indicate the corresponding performance when using
    opponent-informed exploration to protect against given sets of
    heuristic policies.}
  \label{fig:atlatl-wide}
\end{figure*}

\subsection{Atlatl}
Atlatl is a game implemented in Python by Chris Darken at the Naval Postgraduate School~\cite{cjdarken_public_2025,rood2022scaling}. The
game takes place on a hexagonal grid with two factions, red and blue,
taking turns. A faction consists of multiple units (infantry,
mechanized infantry, armor, artillery), and each unit can move once
per turn, resulting in multiple actions per turn. Distinct units have
different mobility depending on the terrain (clear, rough, marsh,
urban, unused, and water). The color of the hexagon corresponds to the
terrain type. The game state is the full board configuration including
whose turn it currently is. Each unit has a corresponding action space
consisting of either shooting at an opponent or moving to an adjacent
space. Combat is deterministic and rewards are determined according to
a salvo model. Additionally, claiming or holding urban areas
contributes positively to the faction's reward. Each game has a
pre-specified duration. Many heuristic opponent policies are available
in Atlatl, several of which are advanced enough to be hard to beat as
we show later. The game uses OpenAI Gym environment and Stable
Baselines to standardize the interface. A graphic user-interface,
shown in Figure~\ref{fig:atlatl}, also allows for manual game play.

We consider the Atlatl scenario ``city-inf-5'' shown in
Figure~\ref{fig:atlatl}, which contains on the order of one billion
distinct states, making exhaustive exploration computationally
prohibitive. Our simulations ran on the CPU of a 2021 M1 Max chip with
10 cores and 32GB RAM such that each episode and value update
collectively took about 2.5ms. We set
$\Pi_{red}^\mrm{protect}, \Pi_{blue}^\mrm{protect}$ to both consist of 16
known Atlatl heuristic policies that we want to ``protect'' against.

We first consider the performance of $Q$-learning when using tempered
exploration without explicit ``protection''; essentially taking the
sets $\Pi_1^\mrm{protect}$ and $\Pi_2^\mrm{protect}$
in~\eqref{eq:policy-sets} to be
empty. Figure~\ref{fig:atlatl-wide}(left), shows the worst-case empirical reward
as a function of the number of training episodes. We observe that the
$Q$-learning policy for the red team (bottom) is not a security policy
in the sense of~\eqref{eq:security-policy}, because the red player
gets rewards (purple dots) below the $Q$-value (orange) against some
of the 16 Atlatl heuristic policies. For the blue team (top), the
$Q$-learning policies turns out to protect against these specific
heuristics.

We then use opponent-informed exploration from
Section~\ref{sec:opp-inf}, to provide protection against
$\Pi_{red}^\mrm{protect}$ and $\Pi_{blue}^\mrm{protect}$. The algorithm
terminates in just about 500,000 iterations and we depict the rewards
obtained in Figure~\ref{fig:atlatl-wide}(right), with the rewards
appearing at the 500,000 episodes mark. Playing against the policies
in $\Pi_{red}^\mrm{protect}, \Pi_{blue}^\mrm{protect}$ (purple dots)
provides a reward no worse than the $Q$-value (orange line),
confirming the results in Corollary~\ref{corr:restricted-security}:
with the orange line corresponding to the left-hand side
of~\eqref{eq:rest-values} and the purple dots to specific opponent
policies.

Finally, we combine opponent-informed exploration from
Section~\ref{sec:opp-inf} with tempered exploration
from~Section~\ref{sec:temp-exp}. The results are displayed
in~Figure~\ref{fig:atlatl-wide}(right) and show that we can obtain
rewards closer to those corresponding to the ``unprotected''
$Q$-function while still maintaining security with respect to
$\Pi_{red}^\mrm{protect}, \Pi_{blue}^\mrm{protect}$. The price we pay by
``protecting'' is that the saddle-point tends to be more conservative
for the 2 and 4 million episode settings. This is, at least in part,
due to focusing exploration on protecting against the Atlatl
heuristics rather than general exploration.

\section{Conclusion}

This paper presented a computationally efficient $Q$-learning
formulation for zero-sum turn-based Markov games; showed convergence
of it; and proposed an exploration policy that guarantees
``protection'' against a given set of policies for the opponent, when
the $Q$-learning iteration terminates before a full exploration of the
state-space. Our results are independent of how the $Q$-function is
represented but, in practice, the representation used impacts the
value of this function outside the set of states explored by
$Q$-learning. Future work should consider the impact of specific
representation, like the deep neural networks used in deep
$Q$-learning~\cite{mnih_playing_2013}.

% and characterized how sub-optimal player's policies may be when
% using finite computation in large games. We protected against
% particular sub-optimality by including known adversarial strategies in
% the exploration policy. Future work includes enhancing exploration
% strategies to reduce fragility.

%\section{Acknowledgments}
%
%We would like to thank Prof.~Chris Darken at the Naval Postgraduate
%School for the development of Atlatl and also would like to thank the NSF ACTION AI Institute.

\bibliographystyle{IEEEtran}
\bibliography{bibliography}

\end{document}